\documentclass[a4paper,12pt]{article}
\pdfoutput=1 
\usepackage{geometry}
 \geometry{
 a4paper,
 total={166mm,215mm},
 left=22mm,
 top=40mm,
 }
\usepackage[margin=10mm]{caption}
\usepackage[utf8]{inputenc}
\usepackage[english]{babel}
\usepackage{amssymb, amsmath, amsthm, amsfonts, eucal, mathalpha, mathrsfs, yfonts, lmodern, slashed, microtype}
\usepackage[sorting=none, backend=biber, style=numeric-comp, natbib=true]{biblatex}

\usepackage{array, booktabs}
\usepackage{graphicx, subcaption, tikz, tikz-cd}
\usetikzlibrary{angles, quotes, calc, shapes.geometric}
\usepackage{authblk, csquotes, hyperref}

\theoremstyle{definition}
\newtheorem{definition}{Definition}[section]

\newtheorem{theorem}{Theorem}[section]

\newtheorem{conjecture}{Conjecture}[section]
\newcommand{\beq}{\begin{equation}}
\newcommand{\eeq}{\end{equation}}
\newcommand{\bqa} {\begin{eqnarray}}
\newcommand{\eqa} {\end{eqnarray}}


\def\mfsl{\mathfrak{sl}}
\def\mfg{\mathfrak{g}}
\def \mmV{\mathtt{V}}
\def \mC {\mathbb{C}}
\def \la{\lambda}
\def \mL{\mathtt{L}}
\def\mM {\mathtt{M}}
\def \mZ {\mathbb{Z}}
\def\ga{\gamma}
\def\tr{\mathtt{tr}}
\def\V{\mathscr{V}}
\def\mR{\mathbb{R}}
\def\M{\mathscr{M}}
\def\mN{\mathbb{N}}
\def\C {\mathscr{C}}
\def\al{\alpha}
\def\bet{\beta}

\begin{document}
\begin{flushright}
    ITEP-TH-24/23
\end{flushright}
\vspace{1cm}
\begin{center}
{\Large{\sf  Generalized theta series and monodromy of Casimir connection. The case of rank 1

 }
}
\vspace{10mm} {\sf E. Dotsenko$^{\star}$  
 \\ \vspace{7mm}
 \vspace{2mm} $\star$ - {\sf NRC ''Kurchatov Institute'', 123182,  Moscow, Russia}\\

 \vspace{4mm}
 {\footnotesize \sf email:  edotsenko95@gmail.com\\}
 \vspace{2mm}
\vspace{2mm}}

\end{center}
\begin{abstract}
The monodromy of the $\mfsl(2)$ Casimir connection is considered. It is shown that the trace of the monodromy operator over the appropriate space of flat sections gives rise to the Jacobi theta constant and to the partial Appell-Lerch sums. 

\end{abstract}
\vspace{2mm}
\begin{center}
    \emph{Dedicated to my parents.}
\end{center}
\vspace{2mm}
\begin{center}
    \textbf{Key words:} monodromy of the Casimir connection, Verma modules, partial Appell-Lerch sums. 
\end{center}
\vspace{-4mm}
\section{Introduction}
In the paper the monodromy of the (deformed) $\mfsl(2)$ Casimir connection is studied. Traces of the monodromy of the Casmir connection and its deformation over the Verma modules give rise to the classical Jacobi theta-constant and partial theta functions correspondingly. We recall the definition of the partial Appell-Lerch sums. It is shown that these sums are traces of the monodromy operator of the Casimir connection over the tensor products of the Verma modules. It is remarkable that the similar formulas do appear in the modern 3d topological field theories  \cite{1}, \cite{2}, \cite{3} and 2d conformal field theories \cite{4}, \cite{5}. Les us begin with the short introduction into the subject of the Casimir connection.\par
 To each finite dimensional semisimple complex Lie algebra one can associate two flat connections - the Knizhnik-Zamolodchikov (KZ) flat connection $\nabla_{\mathtt{KZ}}$ and the Casimir connection $\nabla$. The KZ connection appears naturally in the WZW model \cite{6}. The Casimir connection is defined by its compatibility with the KZ connection and shifts the twist parameters of the $\nabla_{\mathtt{KZ}}$ \cite{7}, \cite{8}, \cite{9}. \par   
   Historically the theory of the Casimir connection was built by De-Concini, Felder-Markov-Tarasov-Varchenko \cite{8} and, independently by Millson and Toledano-Laredo \cite{10}, \cite{11}. The generality of \cite{8} allows one to consider the Casimir connection associated to the Kac-Moody algebra (without Serre relations) $\mfg(\mathtt{C})$, where $\mathtt{C}$ is the Cartan matrix of general type. Let us also recall the recent developments \cite{12}.\par  
   The material is organized in the following way: in the \textbf{second section} the relevant facts about $\mfsl(2)$  and representations $\mmV$ are recalled, which will be used to get the Jacobi theta-constant and partial Appell-Lerch sums in the parts 3 and 4 correspondingly. In the \textbf{third section} the connection between $\mmV$, Jacobi theta-constant and partial theta functions is established. In the \textbf{fourth section} the definition of the partial Appell-Lerch sums and their connection to the tensor products of the representations from the second part is presented. 

\section{The $\mfsl(2)$ recollections}
In this part the definition of the $\mfsl(2)$ algebra is recalled, the truncated Casimir operator is defined, also the list of the $\mfsl(2)$ representations is given. As an example one considered representation $\tt{P}$ in which the Jordan type of the truncated Casimir operator is non-trivial.

\subsection{Definition}
The Lie algebra is defined as  $\mfsl(2) = \mC e\oplus \mC h \oplus \mC f , $
 
\begin{equation}
\begin{gathered}
 [h,e] = 2e, \,\, [h,f] = -2f, \,\, [e,f] = h. 
\end{gathered}
\end{equation}
The truncated Csimir operator is the following element of the universal enveloping algebra $\tt{U}(\mfsl(2))$

\begin{equation}
\kappa = ef + fe.
\end{equation}
On $\tt{U}(\mfsl(2))$ there is a comultiplication
 
\begin{equation}
    \begin{gathered}
        \delta: \tt{U}(\mfsl(2))\to \tt{U}(\mfsl(2))\otimes \tt{U}(\mfsl(2)), \\
        \delta(a) = a\otimes 1+ 1\otimes a,\,\, a\in \mfsl(2), \\  \label{comult}
        \delta(xy) = \delta(x)\delta(y),\,\, x,y\in  \tt{U}(\mfsl(2)).
    \end{gathered}
\end{equation}
In particular
\begin{equation}
    \delta(\kappa) = \left(ef + fe\right)\otimes 1  + 2\left(e\otimes f+ f\otimes e\right) + 1\otimes \left(ef + fe\right). \label{comultka}
\end{equation}

\subsection{The list of representations}
Consider the following list of $\mfsl(2)$ representations $\tt{V}$:
\begin{equation}
    \mathtt{M}_{\la},\,\,\mathtt{L}_{\la},\,\, \,\, \mathtt{P}, \label{osl2listprincipal}
\end{equation}
where $\la\in\,0,-2$. For $\la\in\mC$  the Verma module $\mM_{\la}$ has a basis $\{ f^{k}v_{\la}\}_{k = 0}^{\infty}$. Highest weight vector is defined as 
\begin{equation}
e v_{\la} = 0, \\
h v_{\la} = \la v_{\la}.
\end{equation}
 $\mL_\la$ -  irreducible factor of the $\mathtt{M}_{\la}$.

 Representation $\mathtt{P}$ is the tensor product 
\begin{equation}
    \mathtt{P} = \mM_{-1}\otimes \mL_1.
\end{equation}
Representations $\eqref{osl2listprincipal}$ are distinguished by the fact that the Casimir element acts on each weight subspace with zero eigenvalues.

\begin{theorem}
\label{jordancasimir}
 $\kappa$ acts on the weight subspace $\mathtt{P}(-2k)$, $k>0$ and has the following Jordan type 
\begin{equation}
    \kappa\vert_{\mathtt{P}(-2k)} = \begin{pmatrix}
        -2k^2&2 \\
        0& -2k^2
    \end{pmatrix}. 
\end{equation}

\end{theorem}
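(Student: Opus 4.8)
The plan is to render everything completely explicit: pin down a weight basis of the two–dimensional space $\mathtt{P}(-2k)$, apply the comultiplied operator $\delta(\kappa)$ of \eqref{comultka} to it, read off a $2\times2$ matrix, and then recognise that matrix as the asserted Jordan block after one change of basis. Note first that $\kappa=ef+fe$ has weight zero, hence commutes with $h$ and preserves every weight space, which is what makes the statement meaningful.

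I would start by fixing bases. On $\mL_1$ take the weight basis $w_1,w_{-1}$ with $ew_1=0$, $hw_{\pm1}=\pm w_{\pm1}$, $fw_1=w_{-1}$, $fw_{-1}=0$; then $ew_{-1}=w_1$, and directly from the definition $\kappa$ acts on $\mL_1$ as the identity. On $\mM_{-1}$ use $\{f^{j}v_{-1}\}_{j\ge0}$ and record the two elementary formulas $e\,f^{j}v_{\la}=j(\la-j+1)f^{j-1}v_{\la}$ and, using $\kappa=2fe+h$, $\kappa\,f^{j}v_{\la}=\bigl((2j+1)\la-2j^{2}\bigr)f^{j}v_{\la}$, which at $\la=-1$ becomes $\kappa\,f^{j}v_{-1}=-(2j^{2}+2j+1)f^{j}v_{-1}$. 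A weight count in $\mM_{-1}\otimes\mL_1$ then shows that for $k>0$ the space $\mathtt{P}(-2k)$ is spanned by $a_k=f^{k}v_{-1}\otimes w_1$ and $b_k=f^{k-1}v_{-1}\otimes w_{-1}$ (for $k=0$ it is one–dimensional, which is why $k>0$ is assumed).

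Next I would push $a_k$ and $b_k$ through the four summands of \eqref{comultka}. Each acts transparently: $2e\otimes f$ annihilates $b_k$ because $fw_{-1}=0$ and, by the formulas above with $\la=-1$, sends $a_k$ into the $b_k$–line; $2f\otimes e$ annihilates $a_k$ and sends $b_k$ into the $a_k$–line; and $\kappa\otimes1$, $1\otimes\kappa$ are diagonal in $(a_k,b_k)$ with the scalars just recorded. Adding up, I would obtain in the basis $(a_k,b_k)$
\[
\kappa|_{\mathtt{P}(-2k)}=\begin{pmatrix} -2k^{2}-2k & 2\\ -2k^{2} & 2k-2k^{2}\end{pmatrix},
\]
with trace $-4k^{2}$ and determinant $4k^{4}$, hence characteristic polynomial $(t+2k^{2})^{2}$, so $-2k^{2}$ is the only eigenvalue. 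Since for every $k>0$ this matrix is not a scalar multiple of the identity, $\kappa|_{\mathtt{P}(-2k)}$ is a single $2\times2$ Jordan block with eigenvalue $-2k^{2}$. To land on exactly the normalisation in the statement — off-diagonal entry $2$ rather than $1$ — I would pass to the basis $(u_k,b_k)$, where $u_k=a_k+k\,b_k$ generates the line $\ker(\kappa+2k^{2})$, and verify $(\kappa+2k^{2})b_k=2u_k$; this gives precisely the matrix displayed in the theorem.

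I do not expect a real obstacle: the argument is a short explicit $2\times2$ computation. The only points needing care are the bookkeeping of the eight terms produced when \eqref{comultka} hits $a_k$ and $b_k$, and the observation that in the naive basis $(a_k,b_k)$ the matrix is \emph{not} upper triangular, so the change of basis at the end is genuinely needed. It may also be worth noting, as a sanity check and a conceptual remark, that on $\mathtt{P}(-2k)$ the summand $\tfrac12\delta(h^{2})$ is the scalar $2k^{2}$, so there $\kappa$ differs only by a constant from the full Casimir $\delta\!\left(ef+fe+\tfrac12h^{2}\right)$, which is nilpotent of rank one; the nontrivial Jordan block thus encodes the failure of complete reducibility of $\mM_{-1}\otimes\mL_1$.
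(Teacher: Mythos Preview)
Your proof is correct and follows essentially the same route as the paper: identify the two-dimensional weight space $\mathtt{P}(-2k)$ with basis $f^{k}v_{-1}\otimes u_{1}$, $f^{k-1}v_{-1}\otimes u_{-1}$, compute the matrix of $\delta(\kappa)$ in that basis (your matrix agrees with the paper's $\left(\begin{smallmatrix}-2k(k+1)&2\\-2k^{2}&-2k(k-1)\end{smallmatrix}\right)$), and then pass to the new basis $\{a_k+k\,b_k,\,b_k\}$ to obtain the displayed Jordan block. Your write-up is in fact more explicit than the paper's, and the closing remark linking the Jordan block to the nilpotency of the full Casimir and the non-semisimplicity of $\mM_{-1}\otimes\mL_1$ is a nice addition.
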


\begin{proof}
$\kappa$ acts on the weight subspace because 
\begin{equation}
    [h,\kappa] = 0.
\end{equation}
 $\mathtt{L}_1 = \mC u_{1}\oplus \mC u_{-1},$ where $u_{\pm1}$ are the vectors of highest/lowest weight correspondingly. Because   $ \mathtt{P} = \mM_{-1}\otimes \mL_1$ is the tensor product, hence the action of the truncated Casimir element acts via $\eqref{comultka}$ on weight subspace

 $$\mathtt{P}(-2k) = \mC f^{k}v_{-1}\otimes u_{1}\oplus \mC f^{k-1}v_{-1}\otimes u_{-1}$$  has the form

\begin{equation}
    \left(\kappa\right)\vert_{\mathtt{P}(-2k)} = \begin{pmatrix}
        -2k(k+1) & 2 \\
        - 2k^2 &  -2k(k-1)
    \end{pmatrix}.
\end{equation}
This operator could be brought to the Jordan normal form  
\begin{equation}
\begin{gathered}
     \left(\kappa\right)\vert_{\mathtt{P}(-2k)} =  \begin{pmatrix}
        -2k^2 & 2 \\
        0 &  -2k^2
    \end{pmatrix}, \,\, \text{in the basis } \\
    \{f^{k}v_{-1}\otimes u_{1} + kf^{k-1}v_{-1}\otimes u_{-1},\, f^{k-1}v_{-1}\otimes u_{-1}\}.
\end{gathered}
\end{equation}

\end{proof}
\section{Jacobi  theta- constant and partial theta series}
In this part the Casimir connection $\nabla$ is defined and the computation of the trace of the monodromy over the spaces of flat sections, that corresponds to the representations $\eqref{osl2listprincipal}$, as particular trace the Jacobi theta-constant appears. Also the deformed Casimir connection $\nabla(\phi)$ is defined and 
 traces over the same spaces are computed for it. The main result of this part is the Theorem 1 and formulas $\eqref{theta}$. 
\par
Casimir connection is defined on a trivial bundle
\begin{equation}
    \V = \mC^{\times}\times \tt{V}, \label{triv}
\end{equation}
where $\tt{V}$ is a representation from the list $\eqref{osl2listprincipal}.$ 
The connection and the equation on the flat sections have the following form
\begin{equation}
\begin{gathered}
\nabla = dz\left(\frac{d}{dz} + \hbar \frac{\kappa}{z}\right), \\
     \nabla \psi(z) = 0. \label{cassl2}
\end{gathered}
\end{equation}
where $z$ is a coordinate on $\mC^{\times}$, $\hbar\in \mC$ , $\psi(z)$  is a multivalued function with 0 as its singularity.  $\Gamma(\tt{V},\nabla)$ denotes the space of the flat sections of the flat connection. There is an action of the monodromy group on this space.
\begin{equation}
\pi_{1}(\mC^{\times}) = \mZ \curvearrowright \Gamma(\tt{V},\nabla),
\end{equation}
defined as the analytic continuation  $\mathscr{M}_{\ga}$ along the path $\ga\in \pi_{1}(\mC^{\times})$
\begin{equation}
    \mathscr{M}_{\ga}\cdot\psi(z) = \psi(\ga(1)z).
\end{equation}
The generator in the fundamental group is chosen in the following way
\begin{equation}
 \ga_{0}(t) = \exp(2\pi i t)z_0,\,\, z_{0}\in\mC^{\times},\,\,  t\in [0,1].
\end{equation}
In what follows we will need the following notation:
 
\begin{enumerate}
    \item  $q = \exp(4\pi i \hbar )$
    \item  The trace of the monodromy operator over the space of flat sections
\begin{equation}
  \chi(q,l, \tt{V}): = \tr_{\Gamma(\tt{V},\nabla)} \mathscr{M}^l_{\ga_{0}},\,\, l\in \mZ.
\end{equation}
\item $ \Gamma(\tt{V}(-2k),\nabla)$ - the space of flat sections $\psi(z) $ with values in $\mmV(-2k)$.  
\item $\mathscr{M}_{\ga_{0}}\vert_{\Gamma(\tt{V}(-2k),\nabla)}$ - monodromy operator, restricted to  $\Gamma(\tt{V}(-2k),\nabla)$.
\end{enumerate}

\begin{theorem}
\label{sl2jacteta}
Let $C_{1},C_{2}\in \mC.$ 
Flat sections, their monodromy and the corresponding theta series are collected in the Table 1
 
\begin{table}[ht]
    \centering
    \begin{tabular}{c|c|c|l}
       Representation  &  $\Gamma(\tt{V}(-2k),\nabla)$ &  $\mathscr{M}_{\ga_{0}}\vert_{\Gamma(\tt{V}(-2k),\nabla)}$ &\,\,\,\,\,\,\,\,\,\, $ \chi(q,l, \tt{V})$ \\ \hline
       & & & \\
        $\mL_0$ & $C_1$& 1 & \,\,\,\,\,\,\,\,$\chi(q,l,\tt{L}_{0}) = 1$ \\
        & & & \\
        $\mM_{\lambda}$, $\la = 0,-2$ & $\begin{cases}C_{1}z^{ 2k^2\hbar} ,\,\, k\in -2\mZ_{+}\\
        C_{1}z^{ 2k^2\hbar} ,\,\, k\in -2\mZ_{\geq 0}-2
        \end{cases}
        $ & $\begin{cases}q^{ k^2},\,\, k\in -2\mZ_{+}\\
        q^{ k^2} ,\,\, k\in -2\mZ_{\geq 0}-2
        \end{cases}$ & 
        \,\,\,\,\,\,$\begin{cases}
          \displaystyle \chi(q,l,\mM_{0}) = \sum_{k\geq 0} q^{l  k^2} \\
           \displaystyle \chi(q,l,\mM_{-2}) = \sum_{k\geq 1} q^{l  k^2}
        \end{cases}$ \\
        & & & \\
        $\mathtt{P} $ & $\begin{pmatrix}z^{2k^2\hbar}\left(C_1 -2 \hbar C_2 \log(z) \right) \\ C_2z^{2k^2\hbar} \end{pmatrix}$ &  $\begin{pmatrix}q^{k^2} & 0 \\ 0 & q^{k^2}\end{pmatrix}\begin{pmatrix}1 & -4 \pi i \hbar\\ 0 & 1\end{pmatrix}$ & \,\,\,\,\,\,\,\,\,$\chi(q,l,\tt{P}) = \displaystyle \sum_{k\in \mZ} q^{l k^2} $
    \end{tabular}
    \caption{}
    \label{tab:conn_data_sl2} 
\end{table}
\end{theorem}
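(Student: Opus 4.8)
The plan is to exploit that $\nabla$ commutes with the weight grading of $\mathtt{V}$ and is governed, on each graded piece, by the single operator $\kappa$. Since $[h,\kappa]=0$, the connection preserves every weight subspace $\mathtt{V}(-2k)$, so that $\Gamma(\mathtt{V},\nabla)=\bigoplus_{k}\Gamma(\mathtt{V}(-2k),\nabla)$, the monodromy $\mathscr{M}_{\ga_{0}}$ is block diagonal with respect to this decomposition, and hence
\begin{equation}
\chi(q,l,\mathtt{V})=\sum_{k}\tr_{\Gamma(\mathtt{V}(-2k),\nabla)}\mathscr{M}^{l}_{\ga_{0}}.
\end{equation}
On a fixed weight subspace $\nabla$ is an ordinary first-order linear system with a regular singular point at $z=0$ whose residue is $\hbar\,\kappa|_{\mathtt{V}(-2k)}$, so the statement reduces to four routine steps: (i) compute $\kappa|_{\mathtt{V}(-2k)}$ for each representation in the list \eqref{osl2listprincipal}; (ii) integrate the resulting equation to get the flat sections; (iii) read off the monodromy by analytic continuation along $\ga_{0}$; (iv) take traces and sum over $k$.

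For step (i): $\mL_{0}$ is one-dimensional with $\kappa=0$. For $\mM_{\la}$, $\la\in\{0,-2\}$, the space $\mM_{\la}(-2k)$ is one-dimensional, and using $\kappa=2fe+h$ together with $e\,f^{m}v_{\la}=m(\la-m+1)f^{m-1}v_{\la}$ one gets $\kappa\,f^{m}v_{\la}=(2m\la-2m^{2}+\la)f^{m}v_{\la}$; for $\la=0$ this is $-2k^{2}$ on the line of weight $-2k$, while for $\la=-2$ it equals $-2(m+1)^{2}=-2k^{2}$ since the weight $-2k$ is $\la-2m=-2-2m$. For $\mathtt{P}$ one invokes Theorem~\ref{jordancasimir}: $\kappa|_{\mathtt{P}(-2k)}=\bigl(\begin{smallmatrix}-2k^{2}&2\\0&-2k^{2}\end{smallmatrix}\bigr)$ for $k>0$ in the Jordan basis produced there, and $\kappa|_{\mathtt{P}(0)}=0$.

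For steps (ii)--(iii): wherever $\kappa$ acts as the scalar $-2k^{2}$, the flat section equation $\psi'+\hbar(\kappa/z)\psi=0$ integrates to $\psi(z)=C_{1}z^{2k^{2}\hbar}$, and $\ga_{0}$ sends $z^{2k^{2}\hbar}$ to $e^{4\pi i\hbar k^{2}}z^{2k^{2}\hbar}=q^{k^{2}}z^{2k^{2}\hbar}$, i.e.\ the monodromy is multiplication by $q^{k^{2}}$. On $\mathtt{P}(-2k)$ with $k>0$, writing $\psi=(\psi_{1},\psi_{2})$ in the Jordan basis of Theorem~\ref{jordancasimir}, the lower component solves $\psi_{2}'=(2k^{2}\hbar/z)\psi_{2}$, so $\psi_{2}=C_{2}z^{2k^{2}\hbar}$, and variation of constants in $\psi_{1}'=(2k^{2}\hbar/z)\psi_{1}-(2\hbar/z)\psi_{2}$ yields $\psi_{1}=z^{2k^{2}\hbar}\bigl(C_{1}-2\hbar C_{2}\log z\bigr)$; since $\log z\mapsto\log z+2\pi i$ along $\ga_{0}$, the monodromy on this two-dimensional space is $q^{k^{2}}$ times the unipotent matrix $\bigl(\begin{smallmatrix}1&-4\pi i\hbar\\0&1\end{smallmatrix}\bigr)$, exactly as in Table~\ref{tab:conn_data_sl2}.

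For step (iv), since the scalar $q^{k^{2}}$ commutes with the unipotent part we get $\mathscr{M}^{l}_{\ga_{0}}|_{\mathtt{P}(-2k)}=q^{lk^{2}}\bigl(\begin{smallmatrix}1&-4\pi i\hbar l\\0&1\end{smallmatrix}\bigr)$, of trace $2q^{lk^{2}}$. Summing over the weights that actually occur ($-2k$ with $k\ge 0$ for $\mM_{0}$ and for $\mathtt{P}$, with $k\ge 1$ for $\mM_{-2}$) gives $\chi(q,l,\mL_{0})=1$, $\chi(q,l,\mM_{0})=\sum_{k\ge 0}q^{lk^{2}}$, $\chi(q,l,\mM_{-2})=\sum_{k\ge 1}q^{lk^{2}}$ and
\begin{equation}
\chi(q,l,\mathtt{P})=1+2\sum_{k\ge 1}q^{lk^{2}}=\sum_{k\in\mZ}q^{lk^{2}},
\end{equation}
the last sum being the Jacobi theta constant and the other two partial theta series; all are understood as convergent series for $|q|<1$, equivalently as formal power series in $q$. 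The only step requiring real care is the $\mathtt{P}$ case: one must integrate the rank-two regular-singular system in the Jordan basis of Theorem~\ref{jordancasimir} accurately enough that the $\log z$ term --- hence the unipotent entry $-4\pi i\hbar$ of the monodromy --- appears with the right coefficient. The rest is bookkeeping of weight subspaces and an elementary summation.
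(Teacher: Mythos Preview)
Your proof is correct and follows essentially the same approach as the paper: restrict $\nabla$ to each weight subspace, compute $\kappa$ there (invoking Theorem~\ref{jordancasimir} for $\mathtt{P}$), solve the resulting regular-singular scalar or rank-two system, read off the monodromy, and sum the traces. The paper writes out only the $\mathtt{P}$ case explicitly and declares the remaining rows ``similar,'' whereas you spell out the $\mathtt{L}_0$ and $\mathtt{M}_\lambda$ computations as well, but the underlying argument is the same.
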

\begin{proof}
Let us consider the computation of the flat sections with values in $\mathtt{P}$.

$\psi(z)\in \Gamma(\tt{P}(-2k),\nabla)$ has the form
\begin{equation}
    \psi(z)\vert_{\mathtt{P}(-2k)} = \psi_1(z)f^{k}v_{-1}\otimes u_{1} + \psi_2(z)f^{k-1}v_{-1}\otimes u_{-1}.
\end{equation}
Using the Theorem 1, one has

\begin{equation}
\begin{gathered}
    \left(\frac{d}{dz} + \hbar \frac{\begin{pmatrix}
        -2k^2 & 2 \\
        0 &  -2k^2
    \end{pmatrix}}{z}\right) \begin{pmatrix}
        \psi^{'}_1(z)\\ \psi^{'}_2(z)
    \end{pmatrix} = 0, \\
    \begin{pmatrix}
        \psi^{'}_1(z)\\ \psi^{'}_2(z) \end{pmatrix} = \begin{pmatrix}
        z^{2k^2\hbar}\left(C_1 - 2\hbar C_2 \log(z) \right) \\ C_2z^{2k^2\hbar} \end{pmatrix}.
\end{gathered}
\end{equation}

\par 
Jacobi theta-constant in this approach is 
 $\chi(q,l, \tt{P})$. Note that if $l>0$, then the trace converges in the unit disc, otherwise the domain of convergence is the exterior $|q|>1$. Other lines of the Table 1 are considered similarly.  One has to note the relationship with the Riemann zeta function 
\begin{equation}
\begin{gathered}
        \int_{i\mR_{+}}\frac{d\hbar}{\hbar}\chi(e^{4\pi i\hbar},l, \tt{M}_{-2}))\hbar^{\frac{s}{2}} = \frac{\Gamma(\frac{s}{2})}{\left( 4\pi i l \right)^{\frac{s}{2}}}\sum_{n = 1}^{\infty}\frac{1}{n^{s}} 
        = \frac{\Gamma(\frac{s}{2})}{\left( 4\pi i l \right)^{\frac{s}{2}}}\zeta(s).
\end{gathered}
\end{equation}
The r.h.s. of the formula as a function of the \emph{natural number} $l$ possesses the analytic continuation to the multivalued function  on $\mC^{\times}$, at the same time the integral in the l.h.s. could diverge. \par 

\end{proof}
On the bundle $\eqref{triv}$ the deformed Casimir connection $\nabla(\phi), \,\phi\in \mC$ is given by the following formula
\begin{equation}
    \nabla(\phi) = dz\left(\frac{d}{dz} + \hbar \frac{\kappa}{z} +\phi \frac{ h}{z}\right).
\end{equation}
The monodromy operator of $\nabla(\phi)$ around 0 is denoted as $\M_{\ga}(x), x = e^{2\pi i \phi},$ its trace over the space of the flat sections $\Gamma(\tt{V},\nabla (\phi))$
\begin{equation}
    \chi(q,x,l, \mathtt{V} ) :=  \tr_{\Gamma(\mathtt{V},\nabla (\phi))}\M_{\ga}(x).
\end{equation}
It is easy to see that for the deformed connection the last column of the Table 1 is the following 

\begin{equation}
    \begin{gathered}
  \chi(q,x,l, \tt{L}_0 )  = 1, \,\,\,\,\,\,\, \chi(q,x,l, \mM_{0} )  =  \sum_{k\geq 0} q^{l  k^2} x^{lk},\\
  \chi(q,x,l, \mM_{-2} )   =   \displaystyle \sum_{k\in \mN} q^{l  k^2} x^{lk}, \,\,\,\,\,\,\, \chi(q,x,l, \tt{P} )  =  1+ 2\displaystyle \sum_{k\in \mN} q^{l k^2}x^{lk} \label{theta}.
    \end{gathered}
\end{equation}
These traces are known in the literature as partial theta functions. Their modular properties were studied in \cite{5}. Such partial theta functions appear in \cite{5} through the 2d conformal field theory. In the present work such  theta series appear independently.

\section{Partial Appell-Lerch sums}
In this section the definition of the partial Appell-Lerch sums is given, and the connection of this sums with the tensor products from the list $\eqref{osl2listprincipal}$ is shown - Theorem \ref{appelllerch}. The section is concluded by the Conjecture 1.

   Let $q,$ $x_{1},\,x_{2}$ be a formal variables, the following theta-series are known as the Appell-Lerch sum \cite{13}
\begin{equation}
    \begin{gathered}
                f(q,x_{1},x_{2}) =  \sum_{n\in \mZ}\frac{q^{n^2}x_{1}^{n}}{1-q^{2n}x_{2}} =  \sum_{v\in \C} q^{v^{t}\tt{B}v}x_{1}^{v_{1}}x_{2}^{v_{2}},  \\
        \tt{B} = \begin{pmatrix}
            1& 1 \\
            1 &0 
        \end{pmatrix},
    \end{gathered}
    \end{equation}
where $v = \begin{pmatrix}
    v_{1}\\
    v_{2}
\end{pmatrix}\in \mZ^{2}$, $\C\subset \mZ^{2}$ is the set of vectors which have non-negative second coordinate $v_{2}\geq 0.$

\begin{definition}
  Let $\al_{i}, \bet_{i}\in\mZ _{+}, i = 0,\ldots, p;\,\, p  > 0.$ Partial Appell-Lerch sum is defined as the following expression 
    \begin{equation}
        \tilde{f}(q,\overline{\al}; \overline{\bet}; p) = \sum_{n\in \mZ_{\geq 0}}q^{n^2}\frac{\prod_{i = 0}^{p}(\al_i + \bet_i q^{(2n+1)})}{(1-q^{(2n+1)})^p} \label{appelllerchinc}
    \end{equation}
    
    \end{definition}
    It is worth to mention that partial theta functions, associated to bilinear forms of the higher rank was studied in \cite{4}, but partial Appell-Lerch sums were not studied there.
\newpage

\begin{theorem}
\label{appelllerch}
Le us denote $\mathtt{R}_{i} = \mM_{0}^{\oplus\al_i}\oplus \mM_{-2}^{\oplus\bet_i}$. Correspondence of $\mathtt{R}_{i}$ to the partial Appell-Lerch sums has the shape outlined in the Table 2
\begin{table}[h!]
\centering
    \begin{tabular}{c|l}
       Representation   & \,\,\,\,\,\,\,\,\,\, \,\,\,\,\,\,\,\,\,\, \,\,\,\,\,\,\,\,\,\, $ \chi(q,l,\tt{V})$ \\ \hline
         & \\
        $\displaystyle\mM_{0}\otimes \mM_{0}$ & $\chi(q,l,\mM_{0}\otimes \mM_{0}) = \displaystyle\sum_{n\in \mZ_{\geq 0}}\frac{q^{ ln^2}}{1-q^{ l(2n+1)}}$ \\
         & \\
        $\displaystyle\mM_{0}\otimes \mathtt{P}$  & $\chi(q,l,\mM_{0}\otimes \mathtt{P}) = \displaystyle \sum_{n\in \mZ_{\geq 0}}q^{ln^2}\frac{1+q^{l(2n+1)}}{1-q^{l(2n+1)}}$\\ 
        & \\
        $\displaystyle\mathtt{P} \otimes \mathtt{P}$ & $ \chi(q,l,\mathtt{P}\otimes \mathtt{P}) = \displaystyle\sum_{n\in \mZ_{\geq 0}}q^{ln^2}\frac{\left(1+q^{l(2n+1)}\right)^2}{1-q^{l(2n+1)}}$ \\
        &  \\
        $\displaystyle\bigotimes_{i = 0}^{p}\mathtt{R}_{i}$ & $ \chi(q,l,\bigotimes_{i = 0}^{p}\mathtt{R}_{i}) = \tilde{f}(q,\overline{\al}; \overline{\bet}; p)$
    \end{tabular}
    \caption{}
    \label{tab:conn_data_sl2_appel}
\end{table}

\end{theorem}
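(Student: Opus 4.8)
The plan is to compute the trace $\chi(q,l,\mathtt{V})$ for the tensor products by decomposing each factor into weight subspaces, identifying the flat sections and the monodromy action weight-by-weight, and then summing the resulting geometric series. The starting point is the observation that the monodromy trace is multiplicative under tensor products in the sense that a flat section of $\nabla$ on $\mathtt{V}\otimes\mathtt{W}$ decomposes according to the weight grading $(\mathtt{V}\otimes\mathtt{W})(\mu)=\bigoplus_{a+b=\mu}\mathtt{V}(a)\otimes\mathtt{W}(b)$, and on each such summand $\kappa$ acts through the coproduct formula \eqref{comultka}. So first I would record, from the already-established Table 1, that on $\mathtt{M}_0$ (resp. $\mathtt{M}_{-2}$) the monodromy eigenvalue on the weight $-2k$ subspace is $q^{k^2}$ for $k\geq 0$ (resp. $k\geq 1$), and on $\mathtt{P}$ the restricted monodromy on $\mathtt{P}(-2k)$ is $q^{k^2}$ times a unipotent $2\times 2$ block whose trace is $2q^{k^2}$, valid for $k\geq 1$, while $\mathtt{P}(0)$ contributes $1$ (the $k=0$ term). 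Thus $\chi(q,l,\mathtt{P})=1+2\sum_{k\geq 1}q^{lk^2}=\sum_{k\in\mathbb{Z}}q^{lk^2}$, matching the last line of Table 1.

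Next I would treat $\mathtt{M}_0\otimes\mathtt{M}_0$. A weight vector of weight $-2k$ in the tensor product is a combination of $f^av_0\otimes f^bv_0$ with $a+b=k$, $a,b\geq 0$; there are $k+1$ of these. The key computational claim — and this is where the real work is — is that the truncated Casimir $\delta(\kappa)$ acting on $(\mathtt{M}_0\otimes\mathtt{M}_0)(-2k)$ is, up to the expected semisimple part and a unipotent correction, conjugate to a direct sum whose spectrum controls the monodromy; more precisely, one needs that the flat section equation $\bigl(\tfrac{d}{dz}+\hbar\tfrac{\delta(\kappa)}{z}\bigr)\psi=0$ has a fundamental solution whose monodromy around $0$ has trace equal to the coefficient extracted from $\sum_n q^{ln^2}/(1-q^{l(2n+1)})$. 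Concretely I expect the eigenvalues of $\delta(\kappa)\vert_{(\mathtt{M}_0\otimes\mathtt{M}_0)(-2k)}$ to be $-2(n^2+(k-n)^2+\text{cross terms})$ arranged so that, after taking $z^{\hbar\,(\text{eigenvalue})}$ and passing to $q=e^{4\pi i\hbar}$, the weight-$(-2k)$ contribution to $\chi$ is $\sum_{2n+1\mid\text{appropriate}}$ — i.e. the reindexing $(a,b)\mapsto(n, m)$ with $a=n+m$ type substitution converts the double sum over the weight lattice into the Appell-Lerch shape $\sum_{n\geq 0}q^{ln^2}\sum_{m\geq 0}q^{lm(2n+1)}=\sum_{n\geq 0}q^{ln^2}/(1-q^{l(2n+1)})$. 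Verifying that the off-diagonal (non-semisimple) pieces of $\delta(\kappa)$ do not alter the trace — because they are nilpotent and a unipotent monodromy block has trace equal to its size — is the technical crux.

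For $\mathtt{M}_0\otimes\mathtt{P}$ and $\mathtt{P}\otimes\mathtt{P}$ I would then argue by the same mechanism, using additivity of $\chi$ in direct sums together with the factorization $\mathtt{P}=\mathtt{M}_{-1}\otimes\mathtt{L}_1$ and $\mathtt{L}_1=\mathbb{C}u_1\oplus\mathbb{C}u_{-1}$, so that $\mathtt{M}_0\otimes\mathtt{P}$ and $\mathtt{P}\otimes\mathtt{P}$ reduce to sums of $\mathtt{M}_\lambda\otimes\mathtt{M}_\mu$-type computations; the generating functions $1+q^{l(2n+1)}$ in the numerators are exactly the ``$\sum_{k\in\mathbb{Z}}$ versus $\sum_{k\geq 0}$'' discrepancy coming from the extra lowest-weight vector in $\mathtt{L}_1$. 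Finally, for $\bigotimes_{i=0}^p\mathtt{R}_i$ with $\mathtt{R}_i=\mathtt{M}_0^{\oplus\alpha_i}\oplus\mathtt{M}_{-2}^{\oplus\beta_i}$, multilinearity of the trace over tensor products and additivity over direct sums give
\begin{equation}
\chi\Bigl(q,l,\textstyle\bigotimes_{i=0}^p\mathtt{R}_i\Bigr)=\sum_{n\in\mathbb{Z}_{\geq 0}}q^{ln^2}\,\frac{\prod_{i=0}^p(\alpha_i+\beta_i q^{l(2n+1)})}{(1-q^{l(2n+1)})^{p}},
\end{equation}
which is precisely $\tilde{f}(q,\overline{\alpha};\overline{\beta};p)$ after setting $l=1$ (or absorbing $l$ into $q$); the single overall factor $(1-q^{l(2n+1)})^{-p}$ rather than $(1-q^{l(2n+1)})^{-(p+1)}$ arises because one of the $p+1$ tensor factors supplies the ``diagonal'' summation variable $n$ and the remaining $p$ supply the geometric-series denominators. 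The main obstacle throughout is bookkeeping: carefully matching the weight-lattice index of the tensor product to the summation variables $(n,m)$ of the Appell-Lerch sum, and confirming that the non-semisimple Jordan structure of $\delta(\kappa)$ — which is genuinely present, as Theorem \ref{jordancasimir} shows even for a single $\mathtt{P}$ — contributes only through the size of its unipotent monodromy blocks and hence leaves the traces in the stated closed form.
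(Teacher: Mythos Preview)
Your approach diverges from the paper's in a way that leaves a genuine gap. The paper does \emph{not} attempt to diagonalize $\delta(\kappa)$ on the weight spaces of the tensor product. Instead it invokes the Verma decomposition
\[
\mathtt{M}_0\otimes\mathtt{M}_0=\bigoplus_{i\geq 0}\mathtt{M}_{-2i},
\qquad
\bigotimes_{i=0}^{p}\mathtt{R}_i=\bigoplus_{k\geq 0}\mathtt{M}_{-2k}^{\oplus a_{k,p}},
\]
and then uses the already-known monodromy on a single Verma module $\mathtt{M}_{-2k}$ (the obvious extension of the $\mathtt{M}_0,\mathtt{M}_{-2}$ rows of Table~1). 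On $\mathtt{M}_{-2k}$ the vector $f^{n}v_{-2k}$ has $\kappa$-eigenvalue $-2\bigl(n^{2}+k(2n+1)\bigr)$, so the trace over $\bigoplus_{k}\mathtt{M}_{-2k}^{\oplus a_{k,p}}$ is $\sum_{n,k\geq 0}a_{k,p}\,q^{n^{2}+k(2n+1)}$. The multiplicities $a_{k,p}$ are then read off from the ordinary $\mathfrak{sl}(2)$-character: $\sum_k a_{k,p}x^{-2k}=\prod_i(\alpha_i+\beta_i x^{-2})/(1-x^{-2})^{p}$, which immediately gives the closed form $\tilde f$.

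Your route, by contrast, tries to stay in the tensor basis $f^{a}v_0\otimes f^{b}v_0$ and extract the monodromy trace there. The problem is that $\delta(\kappa)$ is \emph{not} diagonal in that basis because of the cross term $2(e\otimes f+f\otimes e)$; the ``expected'' eigenvalues and the reindexing $(a,b)\mapsto(n,m)$ you describe are exactly what the Verma decomposition provides, but you never derive them. Absent that decomposition there is no mechanism in your argument producing the exponents $n^{2}+(2n+1)m$. Relatedly, the monodromy trace is \emph{not} multiplicative over tensor products (for instance $\chi(q,1,\mathtt{M}_0)^2=\sum_{a,b\geq 0}q^{a^2+b^2}$ has no $q^{3}$ term, whereas $\chi(q,1,\mathtt{M}_0\otimes\mathtt{M}_0)$ does), so ``multilinearity of the trace over tensor products'' cannot be used to pass from $\mathtt{M}_0\otimes\mathtt{M}_0$ to $\bigotimes_i\mathtt{R}_i$; at best distributivity over direct sums reduces you to $2^{p+1}$ separate computations $\chi(q,l,\mathtt{M}_{\epsilon_0}\otimes\cdots\otimes\mathtt{M}_{\epsilon_p})$, none of which you carry out. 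Your heuristic for the exponent $p$ versus $p+1$ in the denominator is likewise not a computation. The fix is precisely the paper's: decompose into Vermas first, then sum.
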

\begin{proof}
 
In order to compute the traces of the monodromy of the Casimir connection, let us utilize the following decomposition 
\begin{equation}
\begin{gathered}
 \mM_{0}\otimes \mM_{0} =  \bigoplus_{i = 0}^{\infty}\mM_{-2i}. \label{sl22vermatensorprod1} 
\end{gathered}
\end{equation}
Using the description of the flat sections and their monodromy with values in $\mM_{-2i}$ one has the following trace

\begin{equation}
    \mathtt{tr}_{\Gamma(\mM_{0}\otimes \mM_{0},\nabla)}\mathscr{M}_{\ga_0} = \sum_{n,m\geq 0}q^{ n^2 + (2n+1)k } = \sum_{n\geq 0}\frac{q^{n^2}}{1-q^{2n+1}},
\end{equation}
that is the partial Appell-Lerch sum.  \par
Let us consider the general tensor product $\displaystyle\bigotimes _{i = 0}^{p}\mathtt{R}_{i}$. There is a sequence of non-negative integers $\mathtt{a}_{k,p}$ such that 
\begin{equation}
    \bigotimes _{i = 0}^{p}\mathtt{R}_{i} = \bigoplus_{k = 0}^{\infty} \mM_{-2k}^{\oplus\mathtt{a}_{k,p}}.
\end{equation}
Let us compute the character for the both sides
\begin{equation}
    \begin{gathered}
        \chi'\left(\bigotimes _{i = 0}^{p}\mathtt{R}_{i},x \right) = \prod_{i  = 0}^{p}\frac{\al_{i} + \bet_{i}x^{-2}}{1-x^{-2}} = \frac{1}{1-x^{-2}}\sum_{k = 0}^{\infty}\mathtt{a}_{k,p}x^{-2k}, \,\,\, \text{hence} \\
        \sum_{k = 0}^{\infty}\mathtt{a}_{k,p}x^{-2k} = \frac{\prod_{i  = 0}^{p}\left(\al_{i} + \bet_{i}x^{-2} \right)}{\left(1-x^{-2}\right)^p}
    \end{gathered}
\end{equation}
Thus,

\begin{equation}
\begin{gathered}
    \mathtt{tr}_{\Gamma(\displaystyle\bigotimes _{i = 0}^{p}\mathtt{R}_{i},\nabla)}\,\mathscr{M}_{\ga_0} = \sum_{n,m\geq 0}q^{n^2 + (2n+1)k } \mathtt{a}_{k,p} = \\ = \sum_{n\geq 0}q^{n^2}\frac{\prod_{i = 0}^{p}\left(\al_{i}+\bet_{i}q^{2n+1}\right)}{\left(1-q^{2n+1}\right)^{p}}. \label{generalsl2trace}
\end{gathered}
\end{equation}
\end{proof}
Let us put forward the following 
\begin{conjecture}
    Let us consider the following tensor product $\displaystyle\bigotimes _{i = 0}^{p}\mathtt{F}_{i},$ where $\mathtt{F}_{i} = \mM_{0}^{\oplus \al_{i}}\oplus\mM_{-2}^{\oplus \bet_{i}}\oplus \mathtt{P}^{\oplus\gamma_{i}}$. Let us denote $\mathtt{F}'_{i} = \mM_{0}^{\oplus \al_{i} + \gamma_{i}}\oplus\mM_{-2}^{\oplus \bet_{i}+ \gamma_{i}}$. Then one has the equality of partial theta series
    \begin{equation}
        \chi(q,l,\bigotimes _{i = 0}^{p}\mathtt{F}_{i} ) = \chi(q,l,\bigotimes _{i = 0}^{p}\mathtt{F}'_{i}).
    \end{equation}
\end{conjecture}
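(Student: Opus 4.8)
The plan is to reduce the Conjecture to Theorem~\ref{appelllerch} by showing that, for the purpose of computing the trace $\chi(q,l,-)$, each tensor factor $\mathtt{P}$ may be replaced by $\mM_{0}\oplus\mM_{-2}$. Two ingredients are needed: that $\mathtt{P}$ has the same character as $\mM_{0}\oplus\mM_{-2}$, and that $\chi(q,l,-)$ is insensitive to the non-split structure of $\mathtt{P}$, depending only on the associated graded of a filtration by Verma modules.

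For the first ingredient, since $\mathtt{P}=\mM_{-1}\otimes\mL_{1}$ and $\chi'(\mL_{1},x)=x+x^{-1}$, one has
\[
\chi'(\mathtt{P},x)=\frac{x^{-1}(x+x^{-1})}{1-x^{-2}}=\frac{1+x^{-2}}{1-x^{-2}}=\chi'(\mM_{0},x)+\chi'(\mM_{-2},x),
\]
so that $\chi'(\mathtt{F}_{i},x)=\chi'(\mathtt{F}'_{i},x)$ for each $i$ and hence $\chi'\bigl(\bigotimes_{i}\mathtt{F}_{i},x\bigr)=\prod_{i}\chi'(\mathtt{F}_{i},x)=\prod_{i}\chi'(\mathtt{F}'_{i},x)=\chi'\bigl(\bigotimes_{i}\mathtt{F}'_{i},x\bigr)$. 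Moreover, reordering tensor factors and writing $\mathtt{P}=\mM_{-1}\otimes\mL_{1}$ exhibits $\bigotimes_{i}\mathtt{F}_{i}$ as a finite direct sum of modules of the form $(\mM_{a_{1}}\otimes\dots\otimes\mM_{a_{r}})\otimes\mL_{1}^{\otimes s}$; by the tensor identity each such module admits a filtration by Verma modules $\mM_{-2k}$. The same is true, trivially, for $\bigotimes_{i}\mathtt{F}'_{i}$, and in both cases the multiplicity $\mathtt{a}_{k}$ of $\mM_{-2k}$ in the associated graded is determined by the common character via $\sum_{k\ge0}\mathtt{a}_{k}x^{-2k}=(1-x^{-2})\,\chi'\bigl(\bigotimes_{i}\mathtt{F}_{i},x\bigr)$.

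For the second ingredient, following the method of the proof of Theorem~\ref{sl2jacteta}, the monodromy of $\nabla$ on $\Gamma(\mathtt{V}(-2k),\nabla)$ equals $\exp\bigl(-2\pi i\hbar\,\kappa|_{\mathtt{V}(-2k)}\bigr)$, so $\chi(q,l,\mathtt{V})=\sum_{k}\tr_{\mathtt{V}(-2k)}\exp\bigl(-2\pi il\hbar\,\kappa|_{\mathtt{V}(-2k)}\bigr)$. Since $\kappa\in U(\mfsl(2))$ commutes with $\mfsl(2)$-module maps and preserves weight spaces, for any submodule $\mathtt{A}\subset\mathtt{B}$ with quotient $\mathtt{C}$ the operator $\kappa|_{\mathtt{B}(-2k)}$ is block upper-triangular with diagonal blocks $\kappa|_{\mathtt{A}(-2k)}$ and $\kappa|_{\mathtt{C}(-2k)}$; as the trace of the exponential of a block upper-triangular matrix is the sum of the traces of the exponentials of its diagonal blocks, $\chi(q,l,-)$ is additive on short exact sequences. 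Hence it depends only on the class in the Grothendieck group and, for a module with a Verma filtration, only on the multiplicities $\mathtt{a}_{k}$. Computing the eigenvalue $-2(k^{2}-k_{0}^{2}+k_{0})$ of $\kappa$ on $\mM_{-2k_{0}}(-2k)$ (as in the proof of Theorem~\ref{jordancasimir}) and performing the summation exactly as in the proof of Theorem~\ref{appelllerch} gives
\[
\chi\Bigl(q,l,\bigotimes_{i=0}^{p}\mathtt{F}_{i}\Bigr)=\sum_{m\ge0}q^{lm^{2}}\sum_{k\ge0}\mathtt{a}_{k}\,q^{l(2m+1)k},
\]
which depends only on the $\mathtt{a}_{k}$, and is therefore the same for $\bigotimes_{i}\mathtt{F}_{i}$ and $\bigotimes_{i}\mathtt{F}'_{i}$. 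This proves the Conjecture.

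The main obstacle is the second ingredient: one must be sure that replacing $\bigotimes_{i}\mathtt{F}_{i}$ by the associated graded of its Verma filtration leaves $\chi(q,l,-)$ unchanged, even though this genuinely changes the module — in particular it turns the non-trivial Jordan block of $\kappa$ on each $\mathtt{P}(-2k)$ from Theorem~\ref{jordancasimir} into a semisimple operator, so that $\bigotimes_{i}\mathtt{F}_{i}$ and $\bigotimes_{i}\mathtt{F}'_{i}$ are not isomorphic. The point is simply that the trace of an exponential sees only the multiset of eigenvalues, which is made precise by the block-triangularity of $\kappa$ with respect to submodules; one should also check the elementary convergence bookkeeping so that the interchange of the summation over $k$ with the trace identities is legitimate.
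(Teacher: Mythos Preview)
The paper does not prove this statement: it is stated as Conjecture~1 immediately after Theorem~\ref{appelllerch} and left open, so there is no proof in the paper to compare against.

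Your argument, however, is correct and in fact settles the conjecture. The essential point is exactly the one you identify: the monodromy on each weight space is $\exp(-2\pi i l\hbar\,\kappa)$, and since $\kappa$ preserves any $\mfsl(2)$-submodule it is block upper-triangular with respect to any short exact sequence, so $\chi(q,l,-)$ is additive on short exact sequences and factors through the Grothendieck group of weight modules with finite-dimensional weight spaces. Because $\mathtt{P}=\mM_{-1}\otimes\mL_{1}$ carries a two-step Verma filtration with subquotients $\mM_{0}$ and $\mM_{-2}$ (this is the standard filtration on a Verma tensored with a finite-dimensional module), one has $[\mathtt{P}]=[\mM_{0}]+[\mM_{-2}]$ and hence $[\mathtt{F}_{i}]=[\mathtt{F}'_{i}]$ for every $i$; exactness of $\otimes_{\mC}$ then gives $[\bigotimes_{i}\mathtt{F}_{i}]=[\bigotimes_{i}\mathtt{F}'_{i}]$, and the equality of traces follows. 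Your eigenvalue computation $\kappa|_{\mM_{-2k_{0}}(-2k)}=-2(k^{2}-k_{0}^{2}+k_{0})$ and the resulting double sum are consistent with the paper's derivation of \eqref{generalsl2trace}. One small comment: you do not actually need to exhibit a Verma filtration of the full tensor product $\bigotimes_{i}\mathtt{F}_{i}$ (which would require knowing that tensor products of Vermas decompose nicely); it suffices to use the finite filtration coming from the two-step filtrations of the $\mathtt{P}$ factors, whose graded pieces are the \emph{same} tensor products of Vermas that appear as direct summands of $\bigotimes_{i}\mathtt{F}'_{i}$, and then invoke additivity finitely many times.
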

\section{Conclusion}
In the paper the connection was demonstrated among the trace of the monodromy of the $\mfsl(2)$ Casimir connection, Verma modules and partial theta functions; tensor products of the Verma modules and partial Appell-Lerch sums.  A generalization of this observation to the Lie algebras of higher rank is planned to discuss in the follow-up papers. 

\vspace{10mm}
\textbf{Acknowledgments.} 
The author thanks B. Feigin, S. Gukov, A. Levin, and M. Olshanetsky for their
interest in the work and support. The author is also grateful to the referee for the valuable comments. \par 
\textbf{Funding.} This work was supported by ongoing institutional funding. No additional grants to carry
out or direct this particular research were obtained. \par
\textbf{Conflict of interest.} The author of this work declares that he has no conflicts of interest.

\end{document}